\numberwithin{equation}{section}
\begin{document}

\allowdisplaybreaks

\renewcommand{\PaperNumber}{007}

\FirstPageHeading

\renewcommand{\thefootnote}{$\star$}

\ShortArticleName{Three Order Parameters in Quantum XZ Spin-Oscillator Models}

\ArticleName{Three Order Parameters in Quantum XZ\\ Spin-Oscillator
Models with Gibbsian Ground States\footnote{This
paper is a contribution to the Proceedings of the Seventh
International Conference ``Symmetry in Nonlinear Mathematical
Physics'' (June 24--30, 2007, Kyiv, Ukraine). The full collection
is available at
\href{http://www.emis.de/journals/SIGMA/symmetry2007.html}{http://www.emis.de/journals/SIGMA/symmetry2007.html}}}

\Author{Teunis C. DORLAS~$^\dag$ and Wolodymyr I. SKRYPNIK~$^\ddag$}

\AuthorNameForHeading{T.C.~Dorlas and W.I.~Skrypnik}

\Address{$^\dag$~Dublin Institute for Advanced Studies,  School of
Theoretical Physics,\\
$\phantom{^\dag}$~10 Burlington Road, Dublin 4, Ireland}

\EmailD{\href{mailto:dorlas@stp.dias.ie}{dorlas@stp.dias.ie}}

\Address{$^\ddag$~Institute of Mathematics of NAS of Ukraine, Kyiv,
Ukraine} \EmailD{\href{mailto:skrypnik@imath.kiev.ua}{skrypnik@imath.kiev.ua}}

\ArticleDates{Received October 29, 2007, in f\/inal form January
08, 2008; Published online January 17, 2008}

\Abstract{Quantum models on the hyper-cubic $d$-dimensional lattice of
spin-$\frac{1}{2}$ particles interacting with linear oscillators are
shown to have three ferromagnetic ground state order parameters. Two
order parameters coincide with the magnetization in the f\/irst and
third directions and the third one is a magnetization in a
continuous oscillator variable. The proofs use a generalized Peierls
argument and two Grif\/f\/iths inequalities. The class of
spin-oscillator Hamiltonians considered manifest maximal ordering in
their ground states. The models have relevance for hydrogen-bond
ferroelectrics. The simplest of these is proven to have a unique
Gibbsian ground state.}

\Keywords{order parameters; spin-boson model; Gibbsian ground state}

\Classification{82B10; 82B20; 82B26}

\def\half{\frac{1}{2}}
\def\eps{\epsilon}
\newcommand\be{\begin{gather}}
\newcommand\ee{\end{gather}}
\newcommand\non{\nonumber}
\newcommand\shalf{{\textstyle \frac{1}{2}}}

\section{Introduction}

In this paper we consider quantum lattice models of oscillators
interacting with spins whose~variables are indexed by the sites of a
hyper-cube $\Lambda$ with the f\/inite number of sites $|\Lambda|$ in
the hyper-cubic lattice $\mathbb Z^d$. Interaction is considered to be
short-range and translation inva\-riant. The~corresponding Hamiltonian
$H_{\Lambda}$ is expressed in terms of the oscillators variables
$q_{\Lambda}=(q_x, x\in \Lambda)\in {\mathbb R}^{|\Lambda|}$ and spin
$\frac{1}{2}$ Pauli matrices $S^l_{\Lambda}=(S^l_x, x\in \Lambda,
l=1,3)$, def\/ined in the tensor product of the
$2^{|\Lambda|}$-dimensional Euclidean space and the space of square
integrable functions ${\mathbb L}^2_{\Lambda}=(\otimes \mathbb
E^{2})^{|\Lambda|}\otimes L^2({\mathbb R}^{|\Lambda|})$,
\begin{gather}
H_{\Lambda}=\sum\limits_{x\in \Lambda}\big[-\partial_x^2+\mu^2 (q_x+\eta
\phi_x(S^3_{\Lambda}))^2-\mu\big]+ \sum\limits_{A\subseteq \Lambda}J_A
S^1_{[A]}+V_{\Lambda},\qquad \mu\geq 0, \quad \eta\in \mathbb
R,\label{eq1}
\end{gather}
where $\partial_x$ is the partial derivative in $q_x$, $J_A$ and
$V_{\Lambda}$ are real-valued measurable functions, the f\/irst of
which depends on $q_{\Lambda}$ and the second on $S^3_{\Lambda}$,
$q_{\Lambda}$ and the translation invariant $\phi_x$ is given by
(see Remark~\ref{remark2} in the end of the paper)
\begin{gather}
\phi_x(s_{\Lambda})=\sum\limits_{A\subseteq\Lambda} J_0(x;A)
s_{[A]}, \qquad J_0(x;x)=1.
\label{eq1'}
\end{gather}
For products of operators, functions and variables we use the
following notation: $B_{[A]}=\prod\limits_{x\in A}B_x$. The scalar
product in $(\otimes {\mathbb E}^{2})^{|\Lambda|}$,  ${\mathbb
L}^2_{\Lambda}$ will be denoted by $(\cdot,\cdot)_0$,
$(\cdot,\cdot)$, respectively. The
Schwartz space of test functions on ${\mathbb R^n}$ will be denoted by ${\mathbb S}({\mathbb R}^n)$.

We require that the Hamiltonian is well def\/ined and bounded from
below on $(\otimes \mathbb E^{2})^{|\Lambda|}\otimes
C_0^{\infty}(\mathbb R^{|\Lambda|})$, i.e.\ the tensor product of the
$2^{|\Lambda|}$-dimensional Euclidean space and the space of
inf\/initely dif\/ferentiable functions with compact supports. The
ground state average for an observable (operator) $F$ is
determined by
\[
\langle F\rangle_{\Lambda}=Z_{\Lambda}^{-1}(\Psi_{\Lambda},
F\Psi_{\Lambda}), \qquad
Z_{\Lambda}=(\Psi_{\Lambda},\Psi_{\Lambda})=||\Psi_{\Lambda}||^2,
\]
where $Z_{\Lambda}$ is a partition function. For partial cases of
$F$ we have
\[
\langle
\hat{q}_{[A]}\phi_{[A']}(S^3_{\Lambda})\rangle_{\Lambda}=Z_{\Lambda}^{-1}\int
q_{[A]} (\Psi_{\Lambda}(q_{\Lambda}),\phi_{[A']}(S^3_{\Lambda})
\Psi_{\Lambda}(q_{\Lambda}))_0dq_{\Lambda},
\]
where the integration is performed over ${\mathbb R}^{|\Lambda|}$ and
$\hat{q}_{[A]}$ is the operator of multiplication by $q_x$.

We will employ the orthonormal basis
$\psi^0_{\Lambda}(s_{\Lambda})$ of the Euclidean space $(\otimes
\mathbb E^2)^{|\Lambda|}$, diagonali\-zing~$S^3_{\Lambda}$, which is
chosen in the following way: $
\psi^0_{\Lambda}(s_{\Lambda})=\otimes_{x\in
\Lambda}\psi^0(s_x)$, $s_x=\pm 1$, $\psi^0(1)=(1,0)$, $\psi^0(-1)=(0,1)$, $
S^1\psi^0(s)=\psi^0(-s)$, $S^3\psi^0(s)=s\psi^0(s)$.
 For $F\in \mathbb L^2_{\Lambda}$ we have the following decomposition
\begin{gather*}
F(q_{\Lambda})=\sum\limits_{s_{\Lambda}}F(q_{\Lambda};s_{\Lambda})
\psi^0_{\Lambda}(s_{\Lambda}),
\end{gather*}
where the summation is performed over the $|\Lambda|$-fold Cartesian
product $(-1,1)^{|\Lambda|}\!$ of the set~$(-1,1)$.

The Hamiltonians in \eqref{eq1} are employed in hydrogen-bond
ferroelectric crystal models, consi\-de\-red in \cite{[K], [VS],[KP]}, and describe interaction between heavy ions
(oscillators with constant frequency) and protons (spins). The
second term with $J_A=0$, $|A|\geq 2$ corresponds to the energy of
protons, tunneling  along hydrogen bonds from one well to another,
and $J_x$ is associated with the tunneling frequency. The last term
in the expression for our Hamiltonian $V_A=\sum\limits_{A\subseteq
\Lambda}J_1(A)S^3_{[A]}$ describes many-body interaction between
protons ($J_1(A)$ is the intensity of the $|A|$-body interaction).

A rigorous analysis of a mean-f\/ield version of the Hamiltonian in~\eqref{eq1} with $\phi_x(S^3_{\Lambda})$ given in~\eqref{eq1'} that is
linear in $S^3$, $J_x\not=0$, $V_{\Lambda}=-(\mu
\eta)^2\sum\limits_{x\in \Lambda}\phi_x^2(S^3_{\Lambda})$,  $J_A=0$
for $|A|\geq 2$ and  $J_0(x;y)$ uniformly in lattice sites
proportional to $|\Lambda|^{-1}$ was carried  out in \cite{[KP]} in
the framework of the Bogolyubov approximating Hamiltonian method
\cite{[BBKTZ]} and occurrence of spin and oscillator orderings (the
corresponding order parameters are non-zero) for two-body
interaction between protons at non-zero temperatures was proved. To
establish such orderings for ground states without the
mean-f\/ield limit in a general case is an important task for a
theory.

The oscillator and spin orderings are established if one proves the
existence of ferromagnetic oscillator and spin long-range orders
(lro's) in ground states for the corresponding Hamiltonians. This
means that the ground state averages $\langle
\hat{q}_x\hat{q}_y\rangle_{\Lambda}$, $\langle
S^j_xS^j_y\rangle_{\Lambda}$, $j=1,3$ are bounded uniformly in
$\Lambda$ from below by positive numbers . Occurrence of the
ferromagnetic lro's implies the existence of the spin order
parameters (magnetizations in the f\/irst and third directions) $
M^l_{\Lambda}=|\Lambda|^{-1}\sum\limits_{x\in \Lambda}S^l_x$, $l=1,3$,
and the oscillator order parameter
$Q_{\Lambda}=|\Lambda|^{-1}\sum\limits_{x\in \Lambda}q_x $ in the
thermodynamic limit ($\Lambda\rightarrow \mathbb Z^d$) since the
ground state averages of their squares are uniformly bounded in
$\Lambda$ from below by a positive number.

In this paper we f\/ind functions $V_{\Lambda}$, depending on
oscillator variables, for which  ground states or eigenstates
$\Psi_{\Lambda}$ of the Hamiltonians in \eqref{eq1} are Gibbsian
\begin{equation}
\Psi_{\Lambda}(q_{\Lambda})=\sum\limits_{s_{\Lambda}}
e^{-\frac{1}{2}U(s_{\Lambda};q_{\Lambda})}
\psi^0_{\Lambda}(s_{\Lambda})\psi_{0\Lambda}(q_{\Lambda}),
\label{eq3}
\end{equation}
with the linear in $q_{\Lambda}$ spin-oscillator quasi-potential
energy $U$
\begin{equation}
U(s_{\Lambda};q_{\Lambda})=2\mu \eta \sum\limits_{x\in
\Lambda}q_x\phi _x(s_{\Lambda})+U^0(s_{\Lambda}),\label{eq4}
\end{equation}
where  $\psi_{0\Lambda}(q_{\Lambda})= \prod\limits_{x\in
\Lambda}\psi_0(q_x)$, $\psi_{0}(q)=( \mu \pi^{-1})^{\frac{1}{4}}$
$\exp\{-\frac{\mu}{2}q^2\}$, is the ground state of the free
oscillator Hamiltonian, that is the f\/irst term in the right-hand
side of \eqref{eq1} with $\eta=0$. We prove the maximal ordering in the
corresponding systems (provided some simple conditions on~$U^0$ and
$\phi_x$ are satisf\/ied): magnetizations $ M^l_{\Lambda}$, $l=1,3$,
$Q_{\Lambda}$ are non-zero. No other ground states are known with
such the property.

Gibbsian ground states were introduced by Kirkwood and Thomas in
\cite{[KT]} in  XZ spin-$\frac{1}{2}$ models with Hamiltonians
(linear in $S^1) $ that include the spin part of \eqref{eq1}, i.e.\ the
second and third terms, with only $J_x\not=0$ and the periodic
boundary condition (this boundary condition is not essential).
Matsui in \cite{[M1],[M2]} enlarged a class of
spin-$\frac{1}{2}$ $XZ$-type models in which Gibbsian states exist.
The method was further developed by Datta and Kennedy in
\cite{[DK]}. An application of classical spins systems for
constructing of quantum states was given in \cite{[GM]}.

In \cite{[DS]} we showed how to f\/ind $V_{\Lambda}$ for a given
Gibbsian ground state and established existence of lro's in $S^1$
and $S^3$ for a wide class of the spin-$\frac{1}{2}$ XZ models  (see
also \cite{[S1],[S2]}). This reference contains the most
simple proofs of the existence of lro's in ground states of quantum
many-body systems. A reader may f\/ind a review of the results
concerning several quantum orders in it (see also \cite{[MN],[ESI],[ESII]}).

The ground state in \eqref{eq3} can be represented in the following
equivalent form
\begin{equation}
\Psi_{\Lambda}(q_{\Lambda})=\sum\limits_{s_{\Lambda}}
e^{-\frac{1}{2}U_*(s_{\Lambda})}
\psi^0_{\Lambda}(s_{\Lambda})\psi_{0\Lambda}(q_{\Lambda}+
\eta\phi_{\Lambda}(s_{\Lambda})),\qquad U_*=U^0-\mu
\eta^2\sum\limits_{x\in \Lambda}\phi_x^2.\label{eq5}
\end{equation}

From \eqref{eq5}, orthonormality of the basis (see the beginning of
the third section) and the equalities $\int \psi_0^2(q)dq=1, \int
\psi_0^2(q)qdq=0$ it follows that
\begin{equation}
\langle\hat{q}_x\hat{q}_y\rangle_{\Lambda}=\eta^2\langle\phi_x(S^3_{\Lambda})
\phi_y(S^3_{\Lambda})\rangle_{\Lambda}=
\eta^2\langle\phi_x(\sigma_{\Lambda})
\phi_y(\sigma_{\Lambda})\rangle_{*\Lambda},\qquad  \langle
S^3_{x}S^3_y\rangle_{\Lambda}=
\langle\sigma_{x}\sigma_{y}\rangle_{*\Lambda},\label{eq6}
\end{equation}
where $\sigma_x(s_{\Lambda})=s_x$ and
\[
\langle\phi_x(\sigma_{\Lambda})
\phi_y(\sigma_{\Lambda})\rangle_{*\Lambda}=Z_{*\Lambda}^{-1}
\sum\limits_{s_{\Lambda}}e^{-U_*(s_{\Lambda})} \phi_x(s_{\Lambda})
\phi_y(s_{\Lambda}), \qquad
Z_{*\Lambda}=\sum\limits_{s_{\Lambda}}e^{-U_*(s_{\Lambda})} .
\]
Equalities in \eqref{eq6} reduce
 a calculation of averages in our quantum systems to a
calculation of averages indexed by a star in Ising models.  For a
short-range ferromagnetic potential energy $U_*$ the f\/irst Grif\/f\/iths
inequality holds: $\langle\sigma_{{[A]}}\rangle_{*\Lambda}\geq 0$.
As a result the following statement (principle) is true.

\begin{proposition}\label{proposition1} Let $J_0\geq 0$ in
\eqref{eq1'} and ferromagnetic lro occur in the Ising model with the
ferromagnetic potential energy $U_*$ given by \eqref{eq5}, that is
$\langle\sigma_x\sigma_y\rangle_{*\Lambda}>0$ uniformly in
$\Lambda$, then ferromagnetic lro occurs in oscillator variables and
$S^3$ in the quantum spin-oscillator system with the  Hamiltonian
\eqref{eq1} and ground state $\Psi_{\Lambda}$ in~\eqref{eq3}.
\end{proposition}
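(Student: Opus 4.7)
The plan is to reduce everything to the Ising model with energy $U_*$ via the identities~\eqref{eq6}, and then to extract lower bounds using the f\/irst Grif\/f\/iths inequality. The statement concerning $S^3$ is immediate: the second identity in~\eqref{eq6} gives $\langle S^3_x S^3_y\rangle_{\Lambda} = \langle \sigma_x\sigma_y\rangle_{*\Lambda}$, which by hypothesis is bounded below by a positive constant uniformly in~$\Lambda$.

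For the oscillator long-range order I would use the f\/irst identity in~\eqref{eq6} together with the expansion~\eqref{eq1'} of $\phi_x$. Expanding the product and using the Ising involution $\sigma_x^2=1$ to collapse $\sigma_{[A]}\sigma_{[B]} = \sigma_{[A\triangle B]}$, one obtains
\[
\langle \phi_x(\sigma_\Lambda)\phi_y(\sigma_\Lambda)\rangle_{*\Lambda} = \sum_{A,B\subseteq \Lambda} J_0(x;A)\, J_0(y;B)\, \langle \sigma_{[A\triangle B]}\rangle_{*\Lambda}.
\]
Because $U_*$ is ferromagnetic, the f\/irst Grif\/f\/iths inequality yields $\langle \sigma_{[C]}\rangle_{*\Lambda}\geq 0$ for every $C\subseteq \Lambda$. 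Combined with the hypothesis $J_0\geq 0$, every summand is non-negative, so one may discard all terms except the distinguished one with $A=\{x\}$, $B=\{y\}$. Using the normalization $J_0(x;x)=J_0(y;y)=1$ this isolates
\[
\langle \phi_x(\sigma_\Lambda)\phi_y(\sigma_\Lambda)\rangle_{*\Lambda} \geq \langle \sigma_x \sigma_y\rangle_{*\Lambda},
\]
and multiplying by $\eta^2$ and invoking the hypothesized uniform positive lower bound on the right-hand side delivers the required uniform lower bound on $\langle \hat q_x \hat q_y\rangle_\Lambda$ (for any $\eta\ne 0$).

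There is no substantive obstacle: the argument is essentially bookkeeping once \eqref{eq6} is in hand. The two points that merit care are (i) the algebraic collapse $\sigma_{[A]}\sigma_{[B]}=\sigma_{[A\triangle B]}$, which relies on $\sigma_x^2=1$ and the combinatorial identity for products of monomials in Ising spins, and (ii) the fact that the ferromagnetic character of $U_*$ supplied in the hypothesis is precisely what activates Grif\/f\/iths~I for \emph{arbitrary} subsets $C$, not merely for pair correlations; without this the term-by-term non-negativity of the expanded sum would fail and the reduction to $\langle\sigma_x\sigma_y\rangle_{*\Lambda}$ could not be performed.
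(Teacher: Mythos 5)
Your proof is correct and follows the same route the paper intends: the identities in \eqref{eq6} reduce both correlations to Ising averages with respect to $U_*$, and the f\/irst Grif\/f\/iths inequality $\langle\sigma_{[C]}\rangle_{*\Lambda}\geq 0$ together with $J_0\geq 0$ lets you discard all terms of the expansion of $\langle\phi_x\phi_y\rangle_{*\Lambda}$ except $\langle\sigma_x\sigma_y\rangle_{*\Lambda}$. Your explicit bookkeeping (the collapse $\sigma_{[A]}\sigma_{[B]}=\sigma_{[A\triangle B]}$ and the remark that $\eta\neq 0$ is needed for the oscillator lro) simply fills in details the paper leaves implicit.
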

Usefulness of Gibbsian ground states is explained by comparative
simplicity of a proof of existence of lro. Our results show that
Gibbsian ground states are expected to appear in many quantum
spin-oscillator systems with non-trivial interactions.

Our paper is organized as follows. In Section~\ref{sec2} we
formulate our main results in a lemma and theorem. In next sections
we prove them.

\section{Main result}\label{sec2}

We establish that Gibbsian ground states exist for $J_A$ depending
on $q_{\Lambda}$ if
\begin{equation}
V_{\Lambda}=-\sum\limits_{A\subseteq \Lambda}J_A e^{-\frac{1}{2}
W_A(S^3_{\Lambda})},\qquad
W_A(S^3_{\Lambda})=U(S^{3A}_{\Lambda};q_{\Lambda})-
U(S^3_{\Lambda};q_{\Lambda}),\label{eq7}
\end{equation}
where $S_{\Lambda}^{3A}=(-S^3_{A},S^3_{\Lambda\backslash A})$.
$V_{\Lambda}$ determines an unbounded operator. Negative $J_A$
generate positive functions $V_{\Lambda}$ and this enables us to
prove the following lemma.

\begin{lemma}\label{lemma1}
Let $V_{\Lambda}$ be given by \eqref{eq7} and  have a domain
$D(V_{\Lambda})$, $J_A$ be bounded negative functions and $U$ in
\eqref{eq7} coincide with $U$ in \eqref{eq4}  Then $H_{\Lambda}$ is
positive definite and  essentially self-adjoint on the set $(\otimes\,
{\mathbb C}^{2})^{|\Lambda|}\otimes {\mathbb S}({\mathbb R}^{|\Lambda|})\cap
D(V_{\Lambda})$ that contains $(\otimes\, {\mathbb
C}^{2})^{|\Lambda|} \otimes C_0^{\infty}({\mathbb R}^{|\Lambda|})$ and
$\Psi_{\Lambda}$, given by~\eqref{eq3}, is an eigenfunction of the
Hamiltonian \eqref{eq1} with the zero eigenvalue and is its (unique)
ground state if $J_A\leq 0$ (if the uniform bound $J_{x}\leq J_- <
0$ holds).
\end{lemma}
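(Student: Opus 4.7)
My plan is to exhibit $H_\Lambda$ as a manifestly non-negative sum of two pieces whose joint kernel contains $\Psi_\Lambda$, so that positivity and the eigenvalue identity $H_\Lambda\Psi_\Lambda=0$ drop out of the same factorization; essential self-adjointness is then standard, and uniqueness under the strict bound $J_x\le J_-<0$ is read off from the joint kernel of the summands.

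\textbf{Factorization and positivity.} I would introduce the dressed annihilation/creation operators
\[
a_x=\partial_x+\mu\bigl(q_x+\eta\,\phi_x(S^3_\Lambda)\bigr),\qquad a_x^{*}=-\partial_x+\mu\bigl(q_x+\eta\,\phi_x(S^3_\Lambda)\bigr),
\]
which gives $a_x^{*}a_x=-\partial_x^2+\mu^2(q_x+\eta\phi_x)^2-\mu$, using that $\phi_x(S^3_\Lambda)$ commutes with $\partial_x$ and $[\partial_x,q_x]=1$. With $J_A\le 0$ the spin-flip and $V_\Lambda$ terms combine into
\[
H_\Lambda=\sum_{x\in\Lambda}a_x^{*}a_x+\sum_{A\subseteq\Lambda}|J_A|\bigl(e^{-\tfrac12 W_A(S^3_\Lambda)}-S^1_{[A]}\bigr).
\]
On the two-dimensional subspace $\mathrm{span}\{\psi^0_\Lambda(s_\Lambda),\psi^0_\Lambda(s^A_\Lambda)\}$ each operator in the second sum acts as the $2\times 2$ matrix with diagonal entries $e^{\mp W_A(s_\Lambda)/2}$ and off-diagonals $-1$; since $W_A(s^A_\Lambda)=-W_A(s_\Lambda)$ follows directly from \eqref{eq7}, this matrix has determinant $0$ and trace $2\cosh(W_A/2)\ge 0$, hence is positive semidefinite. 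Multiplied by $|J_A|\ge 0$ and summed, it yields $H_\Lambda\ge 0$.

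\textbf{Zero-energy eigenstate.} For $H_\Lambda\Psi_\Lambda=0$ I would use representation \eqref{eq5} for the first sum: each summand $e^{-U_*(s_\Lambda)/2}\psi^0_\Lambda(s_\Lambda)\psi_{0\Lambda}(q_\Lambda+\eta\phi_\Lambda(s_\Lambda))$ is the shifted harmonic-oscillator ground state and is annihilated by $a_x$ for every $x$. For the second sum I would use representation \eqref{eq3} together with $S^1_{[A]}\psi^0_\Lambda(s_\Lambda)=\psi^0_\Lambda(s^A_\Lambda)$; relabelling $s_\Lambda\mapsto s^A_\Lambda$ in the sum gives $S^1_{[A]}\Psi_\Lambda=e^{-W_A(S^3_\Lambda)/2}\Psi_\Lambda$, so $(e^{-W_A/2}-S^1_{[A]})\Psi_\Lambda=0$ termwise.

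\textbf{Self-adjointness and uniqueness.} Since $H_\Lambda$ is symmetric, bounded below, and $V_\Lambda$ is a non-negative multiplication operator whose unboundedness comes only from the linear-in-$q_\Lambda$ exponents in $W_A$, essential self-adjointness on $(\otimes\mathbb{C}^{2})^{|\Lambda|}\otimes\mathbb{S}(\mathbb{R}^{|\Lambda|})\cap D(V_\Lambda)$ follows from a Faris--Lavine-type theorem for Schr\"odinger operators with positive potentials. For uniqueness, the factorization gives $\ker H_\Lambda=\bigcap_x\ker a_x\cap\bigcap_{A:\,|J_A|>0}\ker(e^{-W_A/2}-S^1_{[A]})$; solving $a_x\Phi=0$ for every $x$ forces $\Phi(q_\Lambda,s_\Lambda)=g(s_\Lambda)e^{-U(s_\Lambda;q_\Lambda)/2}\psi_{0\Lambda}(q_\Lambda)$ for some function $g$, and the remaining conditions then reduce to $g(s_\Lambda)=g(s^A_\Lambda)$ for every $A$ with $|J_A|>0$. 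The uniform bound $J_x\le J_-<0$ makes every single-spin flip $A=\{x\}$ active, so $g$ is constant and $\Phi$ is proportional to $\Psi_\Lambda$. The main obstacle I anticipate is the essential self-adjointness step, because of the exponential growth of $V_\Lambda$ in $q_\Lambda$ and the need to keep the core inside $D(V_\Lambda)$; there I would lean on the non-negativity of $V_\Lambda$ and invoke standard results rather than reprove them.
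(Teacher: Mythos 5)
Your proposal is correct and arrives at all four conclusions, but by a genuinely different route in two of them, so a comparison is worthwhile. For the harmonic part you factor $h_x=a_x^{*}a_x$ with spin-dressed ladder operators, whereas the paper conjugates $h^0_x=(-\partial_x+\mu q_x)(\partial_x+\mu q_x)$ by the spin-dependent translation $T_x(\phi)$; these are the same computation in different clothes. For positivity of the spin part the paper (Proposition~\ref{proposition5}) conjugates $P_A$ by $e^{\frac12 U(S^3_{\Lambda})}$, introduces the modified scalar product $(\cdot,\cdot)_U$ and writes the form as the sum of squares \eqref{eq15}; your $2\times2$ block with determinant $0$ and trace $2\cosh(W_A/2)$ is exactly the same fact (the null vector $(e^{W_A/4},e^{-W_A/4})$ of each block is the local profile of $\Psi_{\Lambda}$), stated more transparently and without the auxiliary inner product. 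The real divergence is uniqueness: the paper shows that $e^{-tH_{\Lambda}}$ is positivity improving via the Feynman--Kac formula, a perturbation expansion in $V_0=-\sum_A J_AS^1_{[A]}$ and irreducibility of the matrix $V_-$, then invokes Theorem~XIII.44 of Reed--Simon; you instead identify $\ker H_{\Lambda}$ with the joint kernel of the non-negative summands, solve $a_x\Phi=0$ explicitly, and use the uniform bound $J_x\le J_-<0$ to make every single-site flip active so that $g$ is constant. Your version is shorter and makes the role of the hypothesis completely explicit (it is the same irreducibility that drives the paper's Perron--Frobenius step), but it owes the reader one justification the semigroup route does not: that a zero-energy vector of the self-adjoint \emph{closure} is annihilated by each summand separately, which requires passing to quadratic forms and using closability of $a_x$ and of the $P_A$-forms on the core; the semigroup argument is immune to such domain issues and yields strict positivity of the ground state as a bonus. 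Finally, on essential self-adjointness you defer to a Faris--Lavine-type result; the paper is more careful here (Proposition~\ref{proposition6}), first absorbing the linear term $2\eta\mu^2\sum_x q_x\phi_x(S^3_{\Lambda})$ by a Kato--Rellich bound relative to $-\Delta+V_{\Lambda}+\mu^2q^2$ and only then citing the standard theorem on the core ${\mathbb S}({\mathbb R}^{|\Lambda|})\cap D(V_{\Lambda})$ --- you should make that intermediate step explicit, since the linear term is what prevents a bare citation.
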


\begin{remark}
\label{remark1} If the functions $J_A$ are only
negative then $\Psi_{\Lambda}$ is the ground state of the
self-adjoint extension of the Hamiltonian preserving positive
def\/initeness.
\end{remark}

 The most simple translation
invariant short-range $U^0$ is ferromagnetic
\begin{equation}
U^0(s_{\Lambda})=-\alpha \sum\limits_{A\subseteq\Lambda} J^0_{A}
s_{[A]},\qquad J^0_{A}\geq 0,\label{eq8}
\end{equation}
where $J^0_A=0$ for odd $|A|$, $\alpha\geq 0$.

\begin{theorem}\label{theorem2}
Let all the conditions of Lemma {\rm \ref{lemma1}} be satisfied, $J_0\geq 0$,
$J_0(x;A)=0$ for even $|A|$, $ J_0(0;1), J^0_{0,1}\geq \bar{J}$ and
$U^0$ be given by \eqref{eq8}. Then

{\rm I.}~For a sufficiently large $\beta=(\eta^2 \mu+\alpha) \bar{J}>1$
there exist ground state lro's in $S^3$ and oscillator variables for
$d\geq 2$;

{\rm II.} Let the positive constants $C$, $B_j$, $j=1,2,$ independent of
$\Lambda$, exist such that $|\phi_x(s_{\Lambda})|\leq C$
\[ W^{(j)}_{A}(s_{\Lambda})=\sum\limits_{x'\in\Lambda}
|\phi^j_{x'}(s^{A}_{\Lambda})-\phi^j_{x'}(s_{\Lambda})|\leq B_j,
\qquad j=1,2,
\]
where $|A|=2$. Then ground state ferromagnetic lro in $S^1$ occurs
in arbitrary dimension $d$.
\end{theorem}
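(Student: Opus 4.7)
The approach is to handle the two parts of the theorem separately: Part~I by reducing, via Proposition~\ref{proposition1} and the identities~\eqref{eq6}, to ferromagnetic lro in the classical Ising model with potential $U_*$, and Part~II by a direct Gaussian evaluation of the $S^1$ two-point function from representation~\eqref{eq5}.

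For Part~I I would first verify that $-U_*$ is ferromagnetic. The piece $-U^0 = \alpha\sum_A J^0_A s_{[A]}$ has non-negative couplings by~\eqref{eq8}, while expanding the correction gives
\[
\mu\eta^2 \sum_{z\in\Lambda}\phi_z^2(s_\Lambda) = \mu\eta^2 \sum_{z}\sum_{A,A'} J_0(z;A)\,J_0(z;A')\,s_{[A\triangle A']},
\]
and since $J_0(z;A)=0$ for even $|A|$ forces both $|A|$ and $|A'|$ odd, $|A\triangle A'|$ is always even, so every term is a product of an even number of spins with a non-negative coefficient. The lower bounds $J^0_{\{0,1\}},J_0(0;\{1\})\geq \bar J$ then yield a nearest-neighbor coupling of order $\beta=(\eta^2\mu+\alpha)\bar J$. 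A generalized Peierls argument applied to this ferromagnetic multi-spin Ising model in dimension $d\geq 2$ gives $\langle\sigma_x\sigma_y\rangle_{*\Lambda}\geq \delta>0$ uniformly in $\Lambda$ once $\beta$ is sufficiently large. Oscillator lro follows from the first Griffiths inequality: the expansion $\phi_x\phi_y=\sum_{A,A'}J_0(x;A)J_0(y;A')\,s_{[A\triangle A']}$ has only non-negative coefficients, and keeping the $A=\{x\}$, $A'=\{y\}$ term (coefficient $J_0(x;x)J_0(y;y)=1$) gives $\langle\phi_x\phi_y\rangle_{*\Lambda}\geq \langle\sigma_x\sigma_y\rangle_{*\Lambda}\geq \delta$. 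Equation~\eqref{eq6} completes Part~I.

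For Part~II I would compute $\langle S^1_xS^1_y\rangle_\Lambda$ directly from~\eqref{eq5}. Because $S^1_x$ permutes the basis $\psi^0(s_\Lambda)$ by flipping $s_x$, basis orthogonality collapses the double spin sum to a single one over configurations paired by $s_\Lambda\leftrightarrow s^{xy}_\Lambda$, and each one-dimensional Gaussian overlap $\int\psi_0(q+a)\psi_0(q+b)\,dq=e^{-\mu(a-b)^2/4}$ evaluates the oscillator integrals. Using $Z_\Lambda=Z_{*\Lambda}$, the result can be written as the Ising expectation
\[
\langle S^1_xS^1_y\rangle_\Lambda = \Bigl\langle \exp\Bigl\{\tfrac12[U_*(\sigma_\Lambda)-U_*(\sigma^{xy}_\Lambda)] - \tfrac{\mu\eta^2}{4}\sum_z[\phi_z(\sigma_\Lambda)-\phi_z(\sigma^{xy}_\Lambda)]^2\Bigr\}\Bigr\rangle_{*\Lambda}.
\]
Since $|\phi_z|\leq C$, one has $[\phi_z(s)-\phi_z(s^{xy})]^2\leq 2C|\phi_z(s)-\phi_z(s^{xy})|$, so $\sum_z[\phi_z(s)-\phi_z(s^{xy})]^2\leq 2C\,W^{(1)}_{xy}(s)\leq 2CB_1$, and the short-range structure of $U^0$ combined with $\bigl|\sum_z[\phi_z^2(s)-\phi_z^2(s^{xy})]\bigr|\leq W^{(2)}_{xy}(s)\leq B_2$ gives a uniform bound $|U_*(s)-U_*(s^{xy})|\leq K$. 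The integrand is then pointwise bounded below by a positive constant independent of $\Lambda,x,y$, which delivers the desired lro in $S^1$ in every dimension~$d$.

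\textbf{Main obstacle.} In Part~I the technical crux is making the generalized Peierls argument go through for a multi-spin ferromagnetic Ising Hamiltonian: the standard nearest-neighbor contour estimate of size $\beta$ must be combined with first-Griffiths positivity to absorb the extra many-body terms. In Part~II the delicate point is the uniform estimate on $|U_*(s)-U_*(s^{xy})|$, which beyond the explicit hypotheses on $W^{(1)}$ and $W^{(2)}$ silently requires $U^0$ to be genuinely short-range so that a single pair flip changes it by a $\Lambda$-independent amount.
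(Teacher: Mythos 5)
Your proposal is correct and follows essentially the same route as the paper: Part~I via the same expansion of $\mu\eta^2\sum_x\phi_x^2$ into even ferromagnetic couplings (odd $|A|$ forcing $|A\Delta A'|$ even), reduction to the classical Ising model with the generalized Peierls argument of the paper's Theorem~\ref{theorem3} and the first Grif\/f\/iths inequality for the oscillator lro; Part~II via the same Gaussian-overlap computation, merely organized around representation~\eqref{eq5} and the differences $\phi_{z}(s_\Lambda)-\phi_{z}(s^{x,y}_\Lambda)$ where the paper uses~\eqref{eq3} and the sums, which are algebraically equivalent. Your remark that a $\Lambda$-uniform bound on $|U^0(s_\Lambda)-U^0(s^{x,y}_\Lambda)|$ is tacitly required matches the paper's own use of an undeclared constant $B_0$ at exactly that step.
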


If $U^0=0$ then $V_{\Lambda}$ from \eqref{eq7} is given by
\[
V_{\Lambda}=-\sum\limits_{A\subseteq \Lambda}J_Av_{[A]}, \hskip 20
ptv_x=\cosh u_x+S^3_x\sinh u_x, \qquad u_x=2\eta\mu
\phi_x(q_{\Lambda}).
\]
This equality follows from the equalities
\[
W_{A}(S^3_{\Lambda})=-2\sum\limits_{x\in A}u_xS^3_x, \qquad
e^{a S^3}=\cosh a +S^3 \sinh a,\qquad (S^3)^2=\it I.
\]
In the simplest case the conditions of item II of Theorem~\ref{theorem2} can be
checked without dif\/f\/iculty.

\begin{proposition}\label{proposition4}
Let  $\phi_x$ be linear in $S^3 $ in \eqref{eq1'} and
$||J_0||_1=\sum\limits_{x}|J_0(x)|<\infty$, where the summation is
performed over ${\mathbb Z}^d$. Then the conditions of item {\rm II} of
Theorem {\rm \ref{theorem2}} are satisfied.
\end{proposition}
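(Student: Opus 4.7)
The plan is to unpack the linearity hypothesis and then verify both conditions by the triangle inequality together with translation invariance and the $\ell^{1}$-summability of $J_0$. Linearity in $S^3$ means that $J_0(x;A)=0$ for $|A|\ge 2$ in \eqref{eq1'}, so $\phi_x(s_\Lambda)=\sum_{y\in\Lambda}J_0(x;y)\,s_y$; by translation invariance $J_0(x;y)$ depends only on $x-y$, and hence $\sum_{y\in\Lambda}|J_0(x;y)|\le||J_0||_1$ uniformly in $x$ and in $\Lambda$.

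The uniform bound on $\phi_x$ is immediate: since $|s_y|=1$,
\[
|\phi_x(s_\Lambda)|\le\sum_{y\in\Lambda}|J_0(x;y)|\le||J_0||_1=:C.
\]
For $|A|=2$, say $A=\{a_1,a_2\}$, the flipped configuration $s_\Lambda^A$ differs from $s_\Lambda$ only at $a_1$ and $a_2$, where the sign is reversed, so linearity yields
\[
\phi_{x'}(s_\Lambda^A)-\phi_{x'}(s_\Lambda)=-2\sum_{y\in A}J_0(x';y)\,s_y,
\]
and therefore $|\phi_{x'}(s_\Lambda^A)-\phi_{x'}(s_\Lambda)|\le 2(|J_0(x';a_1)|+|J_0(x';a_2)|)$. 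Summing over $x'\in\Lambda$ and using translation invariance gives $W_A^{(1)}(s_\Lambda)\le 4||J_0||_1=:B_1$.

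For the $j=2$ case I would factor the difference of squares,
\[
\phi_{x'}^{2}(s_\Lambda^A)-\phi_{x'}^{2}(s_\Lambda)=\bigl(\phi_{x'}(s_\Lambda^A)-\phi_{x'}(s_\Lambda)\bigr)\bigl(\phi_{x'}(s_\Lambda^A)+\phi_{x'}(s_\Lambda)\bigr),
\]
bound the second factor by $2C$ via the first step, and conclude $W_A^{(2)}(s_\Lambda)\le 2C\,W_A^{(1)}(s_\Lambda)\le 8||J_0||_1^{2}=:B_2$. There is no substantial obstacle here; the only detail worth attention is to factor the difference of squares rather than expand each square into a double sum, since the latter route would produce a bound that grows with $|\Lambda|$. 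Linearity of $\phi_x$ in $S^3$ is what makes a two-site flip affect $\phi_{x'}$ only through the two couplings $J_0(x';a_1)$, $J_0(x';a_2)$, while finiteness of $||J_0||_1$ is precisely what converts the subsequent sum over $x'$ into a $\Lambda$-independent constant.
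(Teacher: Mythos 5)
Your proof is correct and follows essentially the same route as the paper: the same triangle-inequality bound $|\phi_x|\le||J_0||_1$, the same two-coupling estimate giving $W^{(1)}_{x,y}\le 4||J_0||_1$, and the same constant $B_2=8||J_0||_1^2$ for the quadratic case. The only cosmetic difference is that you factor $\phi_{x'}^2(s_\Lambda^A)-\phi_{x'}^2(s_\Lambda)$ as a difference of squares and bound one factor by $2C$, whereas the paper writes out the identity $(a-b)^2-(a+b)^2=-4ab$ explicitly; these are the same computation.
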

Note that if one uses the Pauli matrices with $\frac{1}{2}$ instead
of the unity as matrix elements then~$V_{\Lambda}$ should be changed
by adding to $W_A$ the number $-|A|\ln 2$.

If one chooses the anti-ferromagnetic $U^0$ in \eqref{eq7},
specif\/ically,
\[
U^0(s_{\Lambda}) = \mu\eta^2\sum\limits_{x\in \Lambda}
\phi^2_x(s_{\Lambda}) + \alpha\sum\limits_{\langle x,y\rangle \in \Lambda} s_x
s_y, \qquad \alpha>0,
\]
where $x$, $y$ are nearest neighbors, then it can be easily proved that the spin lro in the third
direction will be anti-ferromagnetic, generating a staggered
magnetization (spins at the even and odd sublattices take dif\/ferent
va\-lues).

The interesting and important property of the Hamiltonians  with
$V_{\Lambda}$ given by \eqref{eq7} is that they are simply related
to generators of stationary Markovian processes (see, also,
\cite{[T]}). We believe that it is possible to apply the same
mathematical technique for proving existence of order and phase
transitions in equilibrium quantum systems and non-equilibrium
stochastic systems (see \cite{[S3]}).

\section[Proof of Lemma 1]{Proof of Lemma \ref{lemma1}}

For our purpose it is convenient to pass to a new representation. It
is determined by the Hilbert space of sequences of functions
$F(q_{\Lambda};s_{\Lambda})$, $s_x=\pm 1,$ which are found in the
expansion of the vector $F\in \mathbb L^2_{\Lambda}$ mentioned at the
beginning of the introduction, with the scalar product
\begin{gather*}
(F_1,F_2)=\sum\limits_{s_{\Lambda}}\int
F_1(q_{\Lambda};s_{\Lambda})F_2(q_{\Lambda};s_{\Lambda})dq_{\Lambda},
\\
(F_1(q_{\Lambda}),F_2(q_{\Lambda}))_0=\sum\limits_{s_{\Lambda}}
F_1(q_{\Lambda};s_{\Lambda})F_2(q_{\Lambda};s_{\Lambda}),
\end{gather*}
where the integration is performed over ${\mathbb R}^{|\Lambda|}$. Here
we took into account the orthonormality of the basis, i.e.\ the
equality
\[
(\Psi^0_{\Lambda}(s_{\Lambda}),\Psi^0_{\Lambda}(s'_{\Lambda}))_0=\delta
(s_{\Lambda};s'_{\Lambda})=\prod\limits_{x\in
\Lambda}\delta_{s_x,s'_x},
\]
where $\delta_{s,s'}$ is the Kronecker delta. Let
\begin{gather*}
h_x=-\partial_x^2+\mu^2 (q_x+\eta \phi_x(S^3_{\Lambda}))^2-\mu,
\\
h^0_x=-\partial_x^2+\mu^2 q^2_x-\mu=(-\partial _x+\mu
q_x)(\partial _x+\mu q_x)
\end{gather*}
and \[
h_{\Lambda}=\sum\limits_{x\in \Lambda}h_x,\qquad
h^0_{\Lambda}=\sum\limits_{x\in \Lambda}h^0_x.
\]
\[
(T_x(\phi)F)(q_{\Lambda})=\sum\limits_{s_{\Lambda}}F(q_x+\eta
\phi_x(s_{\Lambda}), q_{\Lambda\backslash
x};s_{\Lambda})\psi^0_{\Lambda}(s_{\Lambda}),
\]
then
\[
h_x=T_x(\phi)h_x^0T^{-1}_x(\phi), \qquad
h_{\Lambda}=T_{\Lambda}h_{\Lambda}^0T^{-1}_{\Lambda}.
\]
Here we
took into account that dif\/ferentiation commutes with $T_{\Lambda}$.

Our Hamiltonian is rewritten as follows
\begin{gather*}
H_{\Lambda}=h_{\Lambda}+\sum\limits_{A\subseteq
\Lambda}J_AP_{[A]},\qquad
P_A=S^1_{[A]}-e^{-\frac{1}{2}W_A(S^3_{\Lambda})}.
\end{gather*}
The remarkable fact is that the symmetric operator $P_{A}$ and the
harmonic operator $h_x$ have both common eigenvector
$\Psi_{\Lambda}$ with the zero eigenvalue (see Remark~\ref{remark3} in the end
of the paper). Note that the space of ground states of the operator
$h_x$ (eigenfunctions with the zero eigenvalue) is
$2^{|\Lambda|}$-fold degenerate since $S_x^3$ is diagonal and the
Laplacian is translation invariant. From \eqref{eq4} and the def\/inition of
$T_x(\phi)$ if follows that $T^{-1}_x(\phi)\Psi_{\Lambda}$ is equal
to $\psi_0(q_x)$ multiplied by a function independent of $q_x$
\[
T^{-1}_x(\phi)\Psi_{\Lambda}=\sum\limits_{s_{\Lambda}}
e^{-\frac{1}{2}U_*(s_{\Lambda})}
\psi^0_{\Lambda}(s_{\Lambda})\psi_{0\Lambda}(q_{\Lambda\backslash
x}+ \eta\phi_{\Lambda\backslash x}(s_{\Lambda\backslash
x}))\psi_0(q_x).
\]
Hence $h^0_xT^{-1}_x(\phi)\Psi_{\Lambda}=0$ and $h_x\Psi_{\Lambda}=
T_x(\phi)h^0_xT^{-1}_x(\phi)\Psi_{\Lambda}=0$. The proof that
$\Psi_{\Lambda}$ is an eigenvector with the zero eigenvalue of $P_A$
is inspired by our previous paper \cite{[DS]}. For simplicity we
will omit $q_{\Lambda}$ in the expression for $U$ in \eqref{eq3}.
Taking into consideration the equalities
\[
S^1_{[A]}\psi^0_{\Lambda}(s_{\Lambda})=\psi^0_{\Lambda}(s^A_{\Lambda})=
\psi^0_{\Lambda}(s_{\Lambda\backslash A}, -s_A),\qquad
S^3_x\psi^0_{\Lambda}(s_{\Lambda})=s_x\psi^0_{\Lambda}(s_{\Lambda}),
\]
we obtain
\begin{gather*}
(\psi_{0\Lambda})^{-1}P_A\Psi_{\Lambda}=
\sum\limits_{s_{\Lambda}}(\psi^0_{\Lambda} (s_{\Lambda\backslash
A},-s_A)-e^{-\frac{1}{2}W_A(s_{\Lambda})}
\psi^0_{\Lambda}(s_{\Lambda})) e^{-\frac{1}{2}U(s_{\Lambda})}
\\
\phantom{(\psi_{0\Lambda})^{-1}P_A\Psi_{\Lambda}}{}
=\sum\limits_{s_{\Lambda}}(\psi^0_{\Lambda} (s_{\Lambda\backslash
A},-s_A)e^{-\frac{1}{2}U(s_{\Lambda})}
-\psi^0_{\Lambda}(s_{\Lambda})e^{-\frac{1}{2}U(s^A_{\Lambda})})
\\
\phantom{(\psi_{0\Lambda})^{-1}P_A\Psi_{\Lambda}}{}
 =\sum\limits_{s_{\Lambda}}(e^{-\frac{1}{2}U(s^A_{\Lambda})}
-e^{-\frac{1}{2}U(s^A_{\Lambda})})\psi^0_{\Lambda}(s_{\Lambda})=0.
\end{gather*}
Here we changed signs of the spin variables $s_A$ in the f\/irst term
in the sum in $s_{\Lambda}$.

Positive def\/initeness of the Hamiltonian follows from the
following proposition.

\begin{proposition}\label{proposition5}
The operator $-P_A$  is positive definite on $D(V_{\Lambda})$.
\end{proposition}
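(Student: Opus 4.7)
The plan is to work in the spin-basis representation introduced at the start of Section~3 and compute the quadratic form $(F,-P_A F)$ directly. In that representation the flip operator $S^1_{[A]}$ acts on the coefficient $F(q_\Lambda;s_\Lambda)$ by reversing the signs of $s_x$ for $x\in A$, and $e^{-W_A(S^3_\Lambda)/2}$ acts diagonally, multiplying $F(q_\Lambda;s_\Lambda)$ by $e^{-W_A(s_\Lambda)/2}$. Consequently,
\[
(P_A F)(q_\Lambda;s_\Lambda)=F(q_\Lambda;s^A_\Lambda)-e^{-W_A(s_\Lambda)/2}F(q_\Lambda;s_\Lambda).
\]

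The essential structural fact I would then record is the antisymmetry
\[
W_A(s^A_\Lambda)=-W_A(s_\Lambda),
\]
which follows instantly from the definition $W_A(s_\Lambda)=U(s^A_\Lambda;q_\Lambda)-U(s_\Lambda;q_\Lambda)$ and the involutive character of $s_\Lambda\mapsto s^A_\Lambda$. Using this I would symmetrize the expression
\[
(F,-P_A F)=\sum_{s_\Lambda}\int\big[e^{-W_A(s_\Lambda)/2}F(q_\Lambda;s_\Lambda)^2-F(q_\Lambda;s_\Lambda)F(q_\Lambda;s^A_\Lambda)\big]dq_\Lambda
\]
by averaging the diagonal term under $s_\Lambda\leftrightarrow s^A_\Lambda$ (the cross term is already invariant). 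This rewrites the quadratic form as a manifest sum of squares,
\[
(F,-P_A F)=\tfrac{1}{2}\sum_{s_\Lambda}\int\big(e^{-W_A(s_\Lambda)/4}F(q_\Lambda;s_\Lambda)-e^{W_A(s_\Lambda)/4}F(q_\Lambda;s^A_\Lambda)\big)^2 dq_\Lambda\ge 0,
\]
which gives the desired positivity.

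I do not anticipate a conceptual obstacle; the whole argument rests on the algebraic identity $W_A(s^A_\Lambda)=-W_A(s_\Lambda)$ and on completing the square with the weights $e^{\pm W_A/4}$. The only technical point to check is that on $D(V_\Lambda)$ all the manipulations above are legitimate: the assumption $F\in D(V_\Lambda)$ ensures that $e^{-W_A(s_\Lambda)/2}F(q_\Lambda;s_\Lambda)$ is square-integrable, so Fubini applies and the sum over $s_\Lambda$ can be interchanged with the integral over $q_\Lambda$. As a consistency check, the kernel of $-P_A$ picked out by the squared integrand is characterized by $e^{-W_A(s_\Lambda)/4}F(q_\Lambda;s_\Lambda)=e^{W_A(s_\Lambda)/4}F(q_\Lambda;s^A_\Lambda)$, and inserting the Gibbsian form \eqref{eq3} of $\Psi_\Lambda$ one sees this reduces to $e^{-U(s_\Lambda)/2}$ and $e^{-U(s^A_\Lambda)/2}$ matching correctly, in agreement with the prior identity $P_A\Psi_\Lambda=0$.
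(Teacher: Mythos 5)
Your proof is correct, and it rests on the same mechanism as the paper's: symmetrization of the quadratic form under the involution $s_\Lambda\mapsto s^A_\Lambda$ combined with the antisymmetry $W_A(s^A_\Lambda)=-W_A(s_\Lambda)$ (equivalently, the invariance of $U(s_\Lambda)+U(s^A_\Lambda)$ under the flip). The difference is essentially presentational. The paper first conjugates, setting $P_A^+=e^{\frac{1}{2}U}P_Ae^{-\frac{1}{2}U}=e^{-\frac{1}{2}W_A}\big(S^1_{[A]}-I\big)$, works in the weighted scalar product $(\cdot,\cdot)_U$ via \eqref{eq14}, and exhibits the form as $\frac{1}{2}\sum_{s_\Lambda}e^{-\frac{1}{2}[U(s_\Lambda)+U(s^A_\Lambda)]}\big(F(s_\Lambda)-F(s^A_\Lambda)\big)^2$ in \eqref{eq15}; substituting $F\mapsto e^{\frac{1}{2}U}F$ there reproduces exactly your completed square with the weights $e^{\pm W_A/4}$. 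Your route skips the auxiliary operator and the auxiliary inner product, which is marginally more economical; the paper's route has the side benefit of displaying $P_A$ as similar to the negative of a weighted graph Laplacian $e^{-\frac{1}{2}W_A}(I-S^1_{[A]})$, which is the structural reason constants in the $U$-weighted picture (i.e.\ the Gibbsian vector) lie in its kernel. Your remarks on the domain $D(V_\Lambda)$ (needed because $W_A$ grows linearly in $q_\Lambda$, so $e^{\pm W_A/4}$ is unbounded) and the consistency check identifying $\ker(-P_A)$ with the condition $F(q_\Lambda;s^A_\Lambda)=e^{-\frac{1}{2}W_A(s_\Lambda)}F(q_\Lambda;s_\Lambda)$, satisfied by $\Psi_\Lambda$, are both sound.
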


\begin{proof} $V_{\Lambda}$ is an operator of multiplication by
inf\/inite dif\/ferentiable functions on each one-dimensional spin
subspace and its domain contains $(\otimes \mathbb
C^{2})^{|\Lambda|}\otimes C_0^{\infty}(\mathbb R^{|\Lambda|})$. This
domain coincides with the direct sum of $2^{|\Lambda|}$ copies of of
$L^2({\mathbb R^{|\Lambda|}}, e^{|q|_0}dq_{\Lambda})$, where
$|q|_0=\sum\limits_{x\in \Lambda}|q_x||\phi_x|$.

The scalar product in the Hilbert space $\mathbb L_{\Lambda}^2$ is
given by $(F_1, F_2)= \int (F_1(q_{\Lambda}), F_2(q_{\Lambda}))\,dq_{\Lambda} $, where the integration is performed over $\mathbb
R^{|\Lambda|}$. We have to show that
$-(P_{A}F(q_{\Lambda}),F(q_{\Lambda}))_0\geq 0$. Let us def\/ine the
operator
\[
P^+_{A}=e^{\frac{1}{2}U(S^3_{\Lambda})}P_{A}
e^{-\frac{1}{2}U(S^3_{\Lambda})}.
\]
It is not dif\/f\/icult to check on the basis $\psi^0_{\Lambda}$ that
\begin{gather*}
P^+_{A}=e^{-\frac{1}{2}W_A(S^3_{\Lambda})}(S^1_{[A]}-I),
\end{gather*}
where $I$ is the unit operator. Here we used the following equality
\begin{gather*}
e^{-\frac{1}{2}U(S^{3A}_{\Lambda})}S^1_{[A]}=
S^1_{[A]}e^{-\frac{1}{2}U(S^3_{\Lambda})}.
\end{gather*}

For the operator $P^+_{A}$ we have
\[
P^+_{A}F=\sum\limits_{s_{\Lambda}}(P^+_{A}F)(q_{\Lambda};s_{\Lambda})
\psi^0_{\Lambda}(s_{\Lambda})
\]
and
\[
(P^+_{A}F)(q_{\Lambda};s_{\Lambda})=-e^{-\frac{1}{2}
W_A(s_{\Lambda})}(F(q_{\Lambda};s_{\Lambda})-F(q_{\Lambda};s^A_{\Lambda})).
\]
It is convenient to introduce the new scalar product
\begin{gather*}
(F_1, F_2)_{U}=(e^{-U(S^3_{\Lambda})}F_1,
F_2)=\sum\limits_{s_{\Lambda}}\int
F_1(q_{\Lambda},s_{\Lambda})F_2(q_{\Lambda},s_{\Lambda})e^{-
U(s_{\Lambda})}dq_{\Lambda}
\\
\phantom{(F_1, F_2)_{U}}{}
=\int (F_1(q_{\Lambda}), F_2(q_{\Lambda}))^0_{U}dq_{\Lambda}=\int
(e^{-U(S^3_{\Lambda})}F_1(q_{\Lambda}),
F_2(q_{\Lambda}))_0dq_{\Lambda}.
\end{gather*}
The operator $P^+_{A}$ is
symmetric with respect to the new scalar product since
\begin{gather}
(P^+_{A}F_1(q_{\Lambda}),F_2(q_{\Lambda}))^0_{U}=
(P_{A}e^{-\frac{1}{2}U(S^3_{\Lambda})}F_1(q_{\Lambda}),
e^{-\frac{1}{2}U(S^3_{\Lambda})}F_2(q_{\Lambda}))_0.\label{eq14}
\end{gather}
It is not dif\/f\/icult to check that
\begin{gather}
-(P^+_{A}F(q_{\Lambda}),F(q_{\Lambda}))^0_{U}=
\sum\limits_{s_{\Lambda}} e^{-\frac{1}{2}
[U(s_{\Lambda})+U(s^A_{\Lambda})]}
(F(q_{\Lambda};s_{\Lambda})-F(q_{\Lambda};s^A_{\Lambda}))
F(q_{\Lambda};s_{\Lambda}) \nonumber\\
\phantom{-(P^+_{A}F(q_{\Lambda}),F(q_{\Lambda}))^0_{U}}{}
=\frac{1}{2}\sum\limits_{s_{\Lambda}} e^{-\frac{1}{2}
[U(s_{\Lambda})+U(s^A_{\Lambda})]}(F(q_{\Lambda};s_{\Lambda})-
F(q_{\Lambda};s^A_{\Lambda}))^2\geq 0.\label{eq15}
\end{gather}
Here we took into account that the function under the sign of
exponent is invariant under changing signs of the spin variables
$s_A$. From \eqref{eq14}, \eqref{eq15} it follows that $-P_{A}\geq
0$.
\end{proof}

From Proposition \ref{proposition5} it follows that
$H^1_{\Lambda}=\sum\limits_{A\subseteq \Lambda}J_AP_A$ is positive
def\/inite since for the negative~$J_A$ we have
\[
(H^1_{\Lambda}F, F)=\sum\limits_{A\subseteq \Lambda}\int
J_A(q_\Lambda)(P_AF(q_{\Lambda}),F(q_{\Lambda}))_0dq_{\Lambda}\geq
0.
\]

The fact that the Hamiltonian is essentially self-adjoint is derived
from the following proposition (see example X.9.3 in \cite{[RS]}).

\begin{proposition}\label{proposition6}\it The operator $-\Delta+\mu^2 x^2+V+(y,x)$,
where $(y,x)$ is the Euclidean scalar product of $x=x_j$,
$j=1,\dots ,n$ with the constant vector $y$,
$\Delta=\sum\limits_{j=1}^{n}\frac{\partial}{\partial x_j}$,
$\mu\not=0$ if $y\not=0$, $V$ is the operator of multiplication
by a non-negative  function $V(x)\in L^2({\mathbb R}^n, e^{-x^2}dx)$,
$x^2=(x,x)$, is essentially self-adjoint on ${\mathbb S}({\mathbb
R}^n)\cap D(V)$, i.e.\ the intersection of the Schwartz space and
the domain of $V$.
\end{proposition}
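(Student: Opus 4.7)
The plan is to reduce the statement to Kato's classical theorem on sums with positive perturbations: if $W \geq 0$ and $W \in L^2_{\mathrm{loc}}(\mathbb{R}^n)$, then $-\Delta + W$ is essentially self-adjoint on $C_0^\infty(\mathbb{R}^n)$, which is exactly the content of Example~X.9.3 in \cite{[RS]}. The hypothesis $V \in L^2(\mathbb{R}^n, e^{-x^2}dx)$ enters only through the immediate inclusion $L^2(\mathbb{R}^n, e^{-x^2}dx) \subset L^2_{\mathrm{loc}}(\mathbb{R}^n)$, since the Gaussian weight is bounded below by a positive constant on every compact set.

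First I would absorb the linear term $(y,x)$ by completing the square. Because $\mu \neq 0$ whenever $y \neq 0$, the translation $a = -y/(2\mu^2)$, realised as a unitary operator $U_a$ on $L^2(\mathbb{R}^n)$, conjugates the given operator into
\[
-\Delta + \mu^2 x^2 + \widetilde{V}(x) - \frac{|y|^2}{4\mu^2},
\]
where $\widetilde{V}(x) = V(x-a) \geq 0$. The crucial point is that membership in $L^2_{\mathrm{loc}}$ is translation invariant (whereas membership in the Gaussian-weighted space is not), so $\widetilde{V} \in L^2_{\mathrm{loc}}$. Since $U_a$ also maps $\mathbb{S}(\mathbb{R}^n)$ onto itself and intertwines $D(V)$ with $D(\widetilde{V})$, and the additive constant is irrelevant for essential self-adjointness, it suffices to treat the case $y=0$.

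With $y=0$ I would set $W(x) = \mu^2 x^2 + \widetilde{V}(x)$: it is non-negative, and lies in $L^2_{\mathrm{loc}}$ because $\mu^2 x^2$ is smooth and $\widetilde{V} \in L^2_{\mathrm{loc}}$ by the observation above. Kato's theorem then gives essential self-adjointness of $-\Delta + W$ on $C_0^\infty(\mathbb{R}^n)$. Every $\phi \in C_0^\infty$ satisfies $\widetilde{V}\phi \in L^2$ (its compact support is contained in a region where $\widetilde{V}$ is $L^2$), so $C_0^\infty \subset \mathbb{S}(\mathbb{R}^n) \cap D(\widetilde{V})$. A symmetric extension of an essentially self-adjoint operator is itself essentially self-adjoint with the same closure, so the restriction to the larger domain $\mathbb{S}(\mathbb{R}^n) \cap D(V)$ is essentially self-adjoint as well.

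The only real obstacle I anticipate is the translation step: one should resist the temptation to track the exact Gaussian-weighted integrability through the shift, because the translated weight picks up an exponential factor $e^{-x\cdot y/\mu^2}$ that generally destroys integrability at infinity. The clean move is to drop at once to the weaker but translation-invariant $L^2_{\mathrm{loc}}$ property, which is precisely what Kato's hypothesis consumes. Everything else is routine assembly of standard facts.
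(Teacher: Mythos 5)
Your proof is correct, but it takes a genuinely different route from the paper. The paper keeps the linear term in place and treats $(y,x)$ as a Kato--Rellich perturbation: it derives the a priori bound $\mu^2\|x^2\psi\|\le\|(-\Delta+V+\mu^2x^2)\psi\|+2\mu^2 n\|\psi\|$ from a commutator computation, combines it with $(y,x)\le\epsilon x^2+\tfrac{n}{4\epsilon}|y|^2$ to make the relative bound of $(y,x)$ arbitrarily small when $\mu\neq0$, and then cites Theorem~X.59 of Reed--Simon (hypercontractivity of the oscillator semigroup), which is where the Gaussian-weighted hypothesis $V\in L^2(\mathbb{R}^n,e^{-x^2}dx)$ is actually consumed and which directly delivers the domain ${\mathbb S}({\mathbb R}^n)\cap D(V)$. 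You instead eliminate the linear term by completing the square and conjugating with a translation, downgrade the hypothesis on $V$ to the translation-invariant property $V\in L^2_{\mathrm{loc}}$, and invoke Kato's positivity theorem for $-\Delta+W$, $W\ge 0$, $W\in L^2_{\mathrm{loc}}$, on $C_0^\infty$, finally passing to the larger domain by maximality of self-adjoint operators (your word ``restriction'' should be ``extension'', but the argument is the standard one and is sound). Your version is arguably more economical and proves a slightly stronger statement, since only $L^2_{\mathrm{loc}}$ regularity of $V$ is ever used; the paper's version avoids conjugation, covers the degenerate case $\mu=y=0$ with the same inequality, and identifies the closure's domain with that of the unperturbed oscillator operator, which is the form in which the result is applied to $H_\Lambda$ in the proof of Lemma~\ref{lemma1}. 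Both arguments rest on standard Chapter~X machinery of Reed--Simon, just on different theorems.
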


\begin{proof} The following  two inequalities are valid
\begin{gather}
\mu^2 ||x^2\psi||\leq ||(-\Delta +V+\mu^2
x^2)\psi||+2\mu^2n||\psi||,\qquad (y,x)\leq \epsilon
x^2+\frac{n}{4\epsilon}|y|^2,
\label{eq15'}
\end{gather}
where $\mu$ is a real number, $|y|=\max\limits_{j}|y|_j$, $||\cdot||$
is the scalar product in the Hilbert space of square integrable
functions. We tacitly assume that $x^2$ means the operator of
multiplication of a squared variable. The f\/irst inequality follows
from the inequalities ($\partial_j =\frac{\partial}{\partial
x_j}$)
\begin{gather*}
(-\Delta +V+\mu^2 x^2)^2 \geq (\mu^2 x^2)^2 -\mu^2(\Delta x^2+x^2
\Delta),
\\
-(\partial_j^2x_k^2+x_k^2\partial_j^2)= -(2x_k\partial_j^2
x_k+2\delta_{j,k}) \geq -2\delta_{j,k},\\ -(\Delta
x^2+x^2 \Delta)\geq-
\sum\limits_{j=1}^{n}(\partial_j^2x_j^2+x_j^2\partial_j^2).
\end{gather*}
Here we took into account that $V$ is positive and $[\partial
_j,x_k]=
\partial_j x_k-x_k\partial_j=\delta_{j,k}$. As a result
\[
||(y,x)\psi||\leq a||(-\Delta +V+\mu^2 x^2)\psi||+b||\psi||, \qquad a=\mu^{-2}\epsilon,\qquad
b=n\left(2+\frac{|y|^2}{4\epsilon}\right).
\]
For $\mu\not=0$ number $a$ can be arbitrary small and from the
Kato--Rellich theorem~\cite{[RS]} it follows that the essential
domain of $-\Delta+V+\mu^2 x^2+(y,x)$ coincides with the essential
domain of $-\Delta+V+\mu^2 x^2$. But the last one coincides with
${\mathbb S}(\mathbb R^n)\cap D(V)$ \cite[Theorem~X.59]{[RS]}.
Inequality~\eqref{eq15'} is suf\/f\/icient, also, for the proof of the
proposition in the case $\mu=y=0$ (this is explained in Example~X.3
in \cite{[RS]}).
\end{proof}

Since $S^3_{\Lambda}$ is  a diagonal operator on $(\otimes {\mathbb
C}^{2})^{|\Lambda|}$ the operator
\[
-\sum\limits_{x\in
\Lambda}\partial_x^2+\mu^2\sum\limits_{x\in
\Lambda}q_x^2+V_{\Lambda} + 2\eta\mu^2\sum\limits_{x\in
\Lambda}q_x\phi_{x}(S^3_{\Lambda})
\]
is the direct sum of $2^{|\Lambda|}$ copies of the minus
$|\Lambda|$-dimensional Laplacian plus the three functions
coinciding with the three functions in Proposition~\ref{proposition6}.  From
Proposition~\ref{proposition6} it follows that this operator is essentially
self-adjoint on the set $(\otimes {\mathbb
C}^{2})^{|\Lambda|}\otimes {\mathbb S}({\mathbb R}^{|\Lambda|})\cap
D(V_{\Lambda})$ that contains $(\otimes {\mathbb
C}^{2})^{|\Lambda|}\otimes \mathbb C_0^{\infty}({\mathbb R}^{|\Lambda|})$.
The same is true for the Hamiltonian since operator $(\eta
\mu)^2\sum\limits_{x\in \Lambda}\phi_x^2(S^3_{\Lambda})$ and the
operator depending on $S^1_{\Lambda}$ in its expression are bounded.

The operators $h_{\Lambda}$, $H^1_{\Lambda}$ are positive def\/inite on
the dense set $(\otimes {\mathbb C}^{2})^{|\Lambda|}\otimes {\mathbb
S}({\mathbb R}^{|\Lambda|})\cap D(V_{\Lambda})$ which is the
essential set for $H_{\Lambda}$. This implies that $H_{\Lambda}$ is
positive def\/inite on its domain $D(H_{\Lambda})$ and~$\Psi_{\Lambda}$ is its ground state.

\medskip
\noindent
{\bf Proof of uniqueness.}  We have to establish that
the symmetric semigroup $P_{\Lambda}^t$, generated by~$-H_{\Lambda}$, maps non-negative functions into (strictly) positive
functions (increases positivity) and this will imply that the ground
state is unique (see Theorem XIII.44 in \cite{[RS]}). We will establish
this property with the help of a perturbation expansion. The kernel
of the semigroup~$P_1^t$, generated by $h_{\Lambda}+V_{\Lambda}$, is
expressed in terms of the Feynman--Kac (FK) formula
\cite{[RS],[Gi]}
\begin{gather*}
P_1^t(q_{\Lambda},s_{\Lambda};q'_{\Lambda},s'_{\Lambda})=
\delta_{s_{\Lambda},s'_{\Lambda}}\int
P^t_{q_{\Lambda},q'_{\Lambda}}(dw_{\Lambda})
\exp\left\{-\int\limits_{0}^{t} V^+_{\Lambda}(w_{\Lambda}(\tau),
s_{\Lambda})d\tau\right\},
\end{gather*}
where
$V^+_{\Lambda}(q_{\Lambda};s_{\Lambda})=V_{\Lambda}(q_{\Lambda};
s_{\Lambda})+\sum\limits_{x\in \Lambda}[\mu^2 (q_x+\eta
\phi_x(s_{\Lambda}))^2-\mu]$,
$P^t_{q_{\Lambda},q'_{\Lambda}}(dw_{\Lambda})=\prod\limits_{x\in
\Lambda}P^t_{q_x,q'_x}(dw_x)$ is the conditional Wiener measure and
$w_{\Lambda}(t)$ is the sequence of continuous paths.  The semigroup~$P^t$  is represented as a perturbation series in powers of $V_0$
\[
V_0=-\sum\limits_{A\subseteq \Lambda}J_A S^1_{[A]}.
\]
This series is convergent in the uniform operator norm~\cite{[Ka]}
since $V_0$ is a bounded operator. Its perturbation expansion is
given by
\[
P^t=\sum\limits_{n\geq 0}P^t_n, \qquad
P^t_n=\int\limits_{0\leq \tau_1\leq \tau_2\leq \cdots \leq\tau_n\leq
t}d\tau_1 \cdots  d\tau_nP_1^{\tau_1}\prod\limits_{j=2}^{n+1}(V_0
P_1^{\tau_{j}-\tau_{j-1}}),
\]
where $\tau_{n+1}=t$ . We now use the following simple inequality
\begin{gather*}
\int_{0}^{t}|V^+_{\Lambda}(w_{\Lambda}(\tau),
s_{\Lambda})|d\tau
\\
\leq \bar{V}(w_{\Lambda}) =|J|_{\Lambda}\int_{0}^{t}\left[ \exp\left\{\alpha\bar{U}_0+2\eta
\mu\sum\limits_{x\in
\Lambda}\bar{\phi}_x|w_x(\tau)|\right\}+\sum\limits_{x\in \Lambda}(\mu^2
(|w_x(\tau)|+\eta \bar{\phi}_x)^2-\mu)\right]d\tau,
\end{gather*}
where $\bar{U}_0=\max\limits_{s_{\Lambda}}U_0(s_{\Lambda})$,
$|J|_{\Lambda}=\sup\limits_{q_{\Lambda}}\sum\limits_{A\subseteq
\Lambda}|J_{A}| $ ,
$\bar{\phi}_x=\max\limits_{s_{\Lambda}}|\phi_x(s_{\Lambda})|$. Let
\[
\bar{P}_1^t(q_{\Lambda};q_{\Lambda})=\int
P^t_{q_{\Lambda},q'_{\Lambda}}(dw_{\Lambda})
e^{-\bar{V}(w_{\Lambda})},\qquad  V_-=-J_-\sum\limits_{x\in
\Lambda}S^1_x
\]
then it follows from the positivity of the the kernel $P^t_1$ and
$V_0$  that
\begin{gather}
P^t(q_{\Lambda},s_{\Lambda};q'_{\Lambda},s'_{\Lambda})\geq
\bar{P}_1^t(q_{\Lambda};q_{\Lambda})\sum\limits_{n\geq
0}\frac{t^n}{n!} V^n_-(s_{\Lambda};s'_{\Lambda}).\label{eq17}
\end{gather}
Here we utilized the semigroup property of $\bar{P}^t$, the
inequalities $e^{a}\geq e^{-|a|}$, $V_0\geq V_-$,
\[
(V_0P_1^{\tau_{j}-\tau_{j-1}})(q_{\Lambda},s_{\Lambda};q'_{\Lambda},s'_{\Lambda})
\geq \bar{P}^{\tau_{j}-\tau_{j-1}}(q_{\Lambda};q'_{\Lambda})
V_-(s_{\Lambda};,s'_{\Lambda}).
\]
Now, it is easily proved as in \cite{[DS]} that the matrix $V_-$  is
irreducible. As a result there exists a~positive integer $n$ such
that and that $V^n_-$, has positive non-diagonal elements
\cite{[GI],[GII]}. Hence the kernel in the left-hand side of
\eqref{eq17} is positive.

\section{Order parameters}

In this section we will prove Theorem \ref{theorem2}, i.e.\ occurrence of
dif\/ferent types of lro in the considered quantum systems. In the
proof we will rely on the following basic theorem.

\begin{theorem}\label{theorem3}
Let the ferromagnetic short-range potential energy $U$ of the
classical Ising model on the hyper-cubic lattice $\mathbb Z^d$
with the partition function $Z_{\Lambda}=\sum\limits_{s_{\Lambda}}
\exp\{-\beta U(s_{\Lambda}) \}$ be given~by
\begin{gather}
U(s_{\Lambda})=-\sum\limits_{A\subseteq\Lambda} \varphi(A)
s_{[A]},\label{eq18}
\end{gather}
where $\varphi(A)\geq 0$, $s_x=\pm1$. Let, also, the uniform bound
$\varphi(x,y)\geq \bar{\varphi}>0$ for nearest neighbors $x$, $y$
hold and $\varphi(A)=0$ for $A$ with odd number of sites. Then for
a sufficiently large $\bar{\varphi}\beta>1$ and the dimension
$d\geq 2$ there is the ferromagnetic lro, that is, for the
Gibbsian two point spin average the uniform in $\Lambda$ bound
holds
\begin{gather*}
\langle \sigma_x\sigma_y\rangle_{\Lambda}>0, 
\end{gather*}
where $\sigma_x(s_{\Lambda})=s_x$ and the magnetization (an order
parameter) $ M_{\Lambda}=|\Lambda|^{-1}\sum\limits_{x\in \Lambda}s_x
$ is non-zero in the thermodynamic limit $\Lambda\rightarrow \mathbb
Z^d$.
\end{theorem}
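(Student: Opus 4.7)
The strategy is to reduce the many-body ferromagnetic Ising model to the pure nearest-neighbor Ising model via correlation inequalities, and then invoke the classical Peierls argument. Throughout, we work with the free-boundary Gibbs state defined by the partition function $Z_\Lambda$ in the statement.

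First, I would observe that since every coupling $\varphi(A)\ge 0$ and only even $|A|$ contribute, the potential energy \eqref{eq18} lies within the scope of the Griffiths (GKS) inequalities. The second Griffiths inequality implies that the two-point function $\langle \sigma_x\sigma_y\rangle_\Lambda$ is non-decreasing in each coupling $\varphi(A)$. Therefore, switching off every coupling except the nearest-neighbor bonds (keeping $\varphi(\{x,y\})=\bar\varphi$ for $\langle x,y\rangle$) can only decrease the correlation:
\begin{gather*}
\langle\sigma_x\sigma_y\rangle_\Lambda \;\geq\; \langle\sigma_x\sigma_y\rangle^{\mathrm{NN}}_{\Lambda,\bar\varphi},
\end{gather*}
where the right-hand side refers to the standard nearest-neighbor ferromagnetic Ising model on $\Lambda$ with coupling $\bar\varphi$. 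This disposes of all the higher-body terms uniformly and reduces the problem to proving long-range order for the classical nearest-neighbor Ising ferromagnet with inverse temperature $\beta \bar\varphi$.

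Next, I would run the Peierls contour argument on the nearest-neighbor model with $+$ boundary conditions. For $d\ge 2$ and any finite $\Lambda\subset\mathbb Z^d$, expand the event $\{\sigma_x=-1\}$ as a union over Peierls contours $\gamma$ enclosing $x$; the standard energy-entropy estimate gives
\begin{gather*}
\langle\sigma_x\rangle^+_{\Lambda,\bar\varphi}
\;\geq\; 1-2\sum_{L\geq 4}\#\{\gamma\ni x,\,|\gamma|=L\}\,e^{-2\beta\bar\varphi L}
\;\geq\; m_*(\beta\bar\varphi,d),
\end{gather*}
where the number of contours of length $L$ surrounding a point grows at most like $(3^d)^L L^{d-1}$, so for $\beta\bar\varphi$ large enough the right-hand side is a strictly positive constant $m_*$ independent of $\Lambda$ and of $x$ (choosing $x$ deep inside $\Lambda$ and using monotonicity in volume). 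By the FKG inequality (a consequence of GKS\,II in this setting), this yields
\begin{gather*}
\langle\sigma_x\sigma_y\rangle^+_{\Lambda,\bar\varphi}\;\geq\;\langle\sigma_x\rangle^+_{\Lambda,\bar\varphi}\,\langle\sigma_y\rangle^+_{\Lambda,\bar\varphi}\;\geq\; m_*^{\,2}.
\end{gather*}

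Finally, to translate this back to the free-boundary state in the theorem, I would use GKS\,II in its volume-monotonicity form: the free-boundary two-point function on $\Lambda$ is bounded below by the one on any smaller volume, and taking $\Lambda'\uparrow\mathbb Z^d$ with $+$ boundary conditions preserves the bound because $\langle\sigma_x\sigma_y\rangle^{\text{free}}_{\Lambda'}=\langle\sigma_x\sigma_y\rangle^{+}_{\Lambda'}$ in the limit by spin-flip symmetry together with GKS. Thus $\langle\sigma_x\sigma_y\rangle_\Lambda\ge m_*^{\,2}>0$ uniformly in $\Lambda$, giving the stated ferromagnetic long-range order; the magnetization $M_\Lambda=|\Lambda|^{-1}\sum_x s_x$ is then non-zero in the thermodynamic limit because $\langle M_\Lambda^2\rangle\ge m_*^{\,2}$.

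The main obstacle, and the reason I spend the first step on Griffiths, is that a direct Peierls computation in the presence of arbitrary even-body interactions $\varphi(A)$ would require tracking how each contour flip changes every multi-site coupling; reducing to nearest-neighbor couplings via GKS\,II bypasses this cleanly. The only slightly delicate point is ensuring the Peierls bound is uniform both in $\Lambda$ and in the location of $x,y$, which is handled by taking $\Lambda$ large and appealing to the volume monotonicity guaranteed by Griffiths\,II; everything else is the textbook Peierls estimate for the two-dimensional (or higher) nearest-neighbor Ising ferromagnet at large $\beta$.
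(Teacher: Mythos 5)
Your opening reduction is sound: since all $\varphi(A)\ge 0$, the second Griffiths inequality does make $\langle\sigma_x\sigma_y\rangle_\Lambda$ monotone in each coupling, so discarding everything but the nearest-neighbour bonds is legitimate, and the Peierls estimate for the plus-boundary nearest-neighbour model together with $\langle\sigma_x\sigma_y\rangle^+\ge\langle\sigma_x\rangle^+\langle\sigma_y\rangle^+$ correctly gives $\langle\sigma_x\sigma_y\rangle^{+}_{\Lambda,\bar\varphi}\ge m_*^2$. The gap is the final transfer back to the free-boundary measure, which is the one the theorem (and the quantum application through \eqref{eq6}) is actually about. Every inequality you have at your disposal runs the wrong way: GKS~II gives $\langle\sigma_x\sigma_y\rangle^{\mathrm{free}}_{\Lambda}\le\langle\sigma_x\sigma_y\rangle^{+}_{\Lambda}$ (switching on a boundary field only increases correlations), and the volume monotonicity $\langle\sigma_x\sigma_y\rangle^{\mathrm{free}}_{\Lambda'}\le\langle\sigma_x\sigma_y\rangle^{\mathrm{free}}_{\Lambda}$ for $\Lambda'\subseteq\Lambda$ only bounds the free correlation from below by its value in a smaller box, which for distant $x,y$ degenerates to something like $\tanh(\beta\bar\varphi)^{|x-y|}$. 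The asserted identity $\langle\sigma_x\sigma_y\rangle^{\mathrm{free}}=\langle\sigma_x\sigma_y\rangle^{+}$ ``in the limit by spin-flip symmetry together with GKS'' does not follow from those two facts; it amounts to knowing that the limiting free state is the symmetric mixture of the plus and minus states, which is a far deeper statement than anything Peierls supplies (Aizenman--Higuchi in $d=2$, and harder still in $d\ge 3$). Even granting it, you would control only the infinite-volume limit, whereas the theorem asserts a bound uniform over finite $\Lambda$ and over $x,y\in\Lambda$, and volume monotonicity bounds the finite-volume free correlation from \emph{above} by its limit, not from below.

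The repair is to run the contour argument directly in the free-boundary ensemble, which is what the paper does: write $\langle\sigma_x\sigma_y\rangle_\Lambda=1-4\langle\chi^+_x\chi^-_y\rangle_\Lambda$ with $\chi^{\pm}_x=\frac{1}{2}(1\pm\sigma_x)$ (using spin-flip symmetry of the free measure), dominate the contour indicator by $e^{-\frac{\beta'}{2}\sum_{\langle u,v\rangle\in\Gamma}\sigma_u\sigma_v}$, and then apply Jensen's inequality followed by GKS~II (the Bricmont--Fontaine scheme) to reduce the resulting exponent to the single-bond average $\tanh(\beta'/2)$. That route also uses the second Griffiths inequality, but at the level of the contour energy $E_\Gamma$ rather than to strip the model down to nearest neighbours at the outset, and it never leaves the free-boundary measure, so uniformity in $\Lambda$ and in $x,y$ is automatic. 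If you want to keep your plus-boundary detour, you must supply a genuine comparison inequality bounding the free two-point function from below in terms of the plus-boundary magnetization; as written, that step is missing.
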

The proof of this theorem is based on an application of the
generalized Peierls principle (argument).  It will be given in the
end of this section (see, also, \cite{[S3],[S4]}. The next
theorem is the consequence of the basic theorem.

\begin{proof}[Proof of item I of Theorem \ref{theorem2}] Condition \eqref{eq1'} shows
that
\[
\sum\limits_{x\in
\Lambda}\phi^2_x(s_{\Lambda})=|\Lambda|+2\sum\limits_{ A\subseteq
\Lambda}J_2(A)s_{[A]}+ \sum\limits_{x\in
\Lambda}\left(\sum\limits_{x\not \in A\subseteq
\Lambda}J_0(x;A)s_{[A]}\right)^2.
\]
where $J_2(x\cup A)=J_0(x;A)$ and $J_2(A)=0$ for odd $|A|$. The
last term is equal to
\[
\sum\limits_{x\in \Lambda}\sum\limits_{x\not \in A\subseteq
\Lambda}J^2_0(x;A)+2\sum\limits_{x\in \Lambda}\sum\limits_{x\not
\in A_1,A_2\subseteq \Lambda}J_0(x;A_1)J_0(x;A_2)s_{[A_1\Delta
A_2]},
\]
where $s_{\varnothing}=1$ and $A_1\Delta A_2=( A_1\cup A_2)\backslash
(A_1\cap A_2)$. Due to translation invariance of interaction the
f\/irst term is bounded by $|\Lambda|\sum\limits_{A}J^2_0(0;A)$, where
the summation is performed over $\mathbb Z^d$, and this expression is
f\/inite since the interaction is short-range. Hence subtracting a
f\/inite constant proportional to $|\Lambda|$ from $U_*$ one sees that
the result admits representation \eqref{eq18} with positive~$J_*$
instead of $\varphi$ such that $J_*(A)=0$ for odd $|A|$ and for
nearest neighbors $x$, $y$ the inequality $J_*(x,y)\geq
(2\eta^2\mu+\alpha)\bar{J}$.
The basic theorem and Proposition~\ref{proposition1} imply occurrence of
ferromagnetic lro in $S^3$ and oscillator lro.
\end{proof}

\begin{proof}[Proof of item II of Theorem \ref{theorem2}] Let $
\psi_{\Lambda}(q_{\Lambda};s_{\Lambda})=
\psi^0_{\Lambda}(s_{\Lambda})\psi_{0\Lambda}(q_{\Lambda}). $ Then
\eqref{eq1'} and the def\/i\-ni\-tion of $S^1$ imply that
\[
S^1_x S^1_y \Psi_{\Lambda}(q_{\Lambda})=\sum\limits_{s_{\Lambda}}
e^{-\frac{1}{2}
U(s_{\Lambda};q_{\Lambda})}\psi_{\Lambda}(q_{\Lambda};s^{x,y}_{\Lambda})
=\sum\limits_{s_{\Lambda}} e^{-\frac{1}{2}
U(s^{x,y}_{\Lambda};q_{\Lambda})}\psi_{\Lambda}(q_{\Lambda};s_{\Lambda}).
\]
Taking into account also the orthonormality of the basis one obtains
\begin{gather*}
\langle S^1_x S^1_y \rangle_{\Lambda}
=Z^{-1}_{\Lambda}\sum\limits_{s_{\Lambda}}\int
e^{-\frac{1}{2}[U(s^{x,y}_{\Lambda};q_{\Lambda})+U(s_{\Lambda};q_{\Lambda})
]}\psi^2_{0\Lambda}(q_{\Lambda})dq_{\Lambda}
\\
\phantom{\langle S^1_x S^1_y \rangle_{\Lambda}}{} =Z^{-1}_{\Lambda} \sum\limits_{s_{\Lambda}} e^{\frac{\eta^2\mu}{4}
\sum\limits_{x'\in
\Lambda}(\phi_{x'}(s_{\Lambda})+\phi_{x'}(s^{x,y}_{\Lambda}))^2}
e^{-
\frac{\alpha}{2}[U_0(s^{x,y}_{\Lambda})+U_0(s_{\Lambda})]}
\\
\phantom{\langle S^1_x S^1_y \rangle_{\Lambda}}{}
\geq e^{-\frac{\alpha}{2} B_0}Z^{-1}_{\Lambda}
\sum\limits_{s_{\Lambda}} e^{\frac{\eta^2\mu}{4}
\sum\limits_{x'\in
\Lambda}(\phi_{x'}(s_{\Lambda})+\phi_{x'}(s^{x,y}_{\Lambda}))^2}
e^{- \alpha U_0(s_{\Lambda})} .
\end{gather*}

We also have
\begin{gather*}
\sum\limits_{x'\in
\Lambda}(\phi_{x'}(s_{\Lambda})+\phi_{x'}(s^{x,y}_{\Lambda}))^2\\
\qquad {}=
\sum\limits_{x'\in
\Lambda}[3\phi^2_{x'}(s_{\Lambda})+\phi^2_{x'}(s^{x,y}_{\Lambda})]+
2\sum\limits_{x'\in
\Lambda}\phi_{x'}(s_{\Lambda})[-\phi_{x'}(s_{\Lambda})+
\phi_{x'}(s^{x,y}_{\Lambda})]
\\
\qquad{}= \sum\limits_{x'\in
\Lambda}4\phi^2_{x'}(s_{\Lambda})+\sum\limits_{x'\in
\Lambda}[-\phi^2_{x'}(s_{\Lambda})+\phi^2_{x'}(s^{x,y}_{\Lambda})]+
2\sum\limits_{x'\in
\Lambda}\phi_{x'}(s_{\Lambda})[-\phi_{x'}(s_{\Lambda})+
\phi_{x'}(s^{x,y}_{\Lambda})]
\\
\qquad{}\geq 4\sum\limits_{x'\in \Lambda}\phi^2_{x'}(s_{\Lambda})-B_2-2CB_1.
\end{gather*}
This yields
\begin{gather*}
\langle S^1_x S^1_y \rangle_{\Lambda}\geq
e^{-\frac{\eta^2\mu}{4}(B_2+2B_1)}e^{-\frac{\alpha}{2}B_0}Z_{\Lambda}^{-1}\sum\limits_{s_{\Lambda}}
e^{\eta^2 \mu\sum\limits_{x'\in \Lambda}\phi^2_{x'}(s_{\Lambda})}
e^{- \alpha U_0(s_{\Lambda})}\\
\phantom{\langle S^1_x S^1_y \rangle_{\Lambda}}{} =
e^{-\frac{\eta^2\mu}{4}(B_2+2CB_1)}e^{-\frac{\alpha}{2}B_0}.\tag*{\qed}
\end{gather*}\renewcommand{\qed}{}
\end{proof}

\begin{proof}[Proof of Proposition \ref{proposition4}]    It is obvious that
$|\phi_x(s_{\Lambda})|\leq ||J_0||_1$ and that
\[
|\phi_{x'}(s^{x,y}_{\Lambda})-\phi_{x'}(s_{\Lambda})|=
|-2[s_{y}J_0(y-x')+s_{x}J_0(x-x')]| \leq 2[|J_0(y-x')|+|J_0(x-x')|].
\]
As a result $ W^{(1)}_{x,y}(s_{\Lambda})\leq 4||J_0||_1. $ Further
\begin{gather*}
|\phi^2_{x'}(s^{x,y}_{\Lambda})-\phi^2_{x'}(s_{\Lambda})|=
\Bigg|\Bigg[\sum\limits_{z\in
\Lambda\backslash(x,y)}J_0(z-x')s_{z}-s_{y}J_0(y-x')-
s_{x}J_0(x-x')\Bigg]^2
\\
\qquad{}-\Bigg [\sum\limits_{z\in \Lambda\backslash
(x,y)}J_0(z-x')s_{z}+s_{y}J_0(y-x')+s_{x}J_0(x-x')\Bigg]^2\Bigg|
\\
\qquad{} =\Bigg|-4\sum\limits_{z\in \Lambda\backslash
(x,y)}J_0(z-x')s_{z}(s_{y}J_0(y-x')+s_{x}J_0(x-x'))\Bigg|
\\
\qquad {} \leq 4\sum\limits_{z\in
\Lambda}|J_0(z-x')|(|J_0(y-x')|+|J_0(x-x')|)\leq4||J_0||_1
(|J_0(y-x')|+|J_0(x-x')|).
\end{gather*}
Hence $ W^{(2)}_{x,y}(s_{\Lambda})\leq 8||J_0||_1^2.$
\end{proof}

\begin{proof}[Proof of Theorem \ref{theorem3}] Let $\chi^{\pm}_x=\frac{1}{2}(1\pm \sigma_x)$
then one obtains
\[
4\langle\chi^+_x\chi^-_y\rangle_{\Lambda}=1+\langle\sigma_x\rangle_{\Lambda}-
\langle\sigma_y\rangle_{\Lambda}-
\langle\sigma_x\sigma_y\rangle_{\Lambda}.
\]
Since the systems are invariant under the transformation of changing
signs of spins the third and the second terms in the right-hand side
of last equality are equal to zero and
\[
\langle\sigma_x\sigma_y\rangle_{\Lambda}=1-4\langle\chi^+_x
\chi^-_y\rangle_{\Lambda}.
\]
Hence if
\begin{gather}
\langle\chi^+_x\chi^-_y\rangle_{\Lambda}<\frac{1}{4}\label{eq21}
\end{gather}
then the ferromagnetic lro occurs, i.e.
\begin{gather}
\langle\sigma_x\sigma_y\rangle_{\Lambda}\geq a>0,\label{eq22}
\end{gather}
where $a$ is independent of $\Lambda$. If one succeeds in proving
that there exists a positive function~$E_0(\beta)$ and positive
constants $a$, $a'$ independent of $\Lambda$ such that
\begin{gather}
\langle\chi^+_x\chi^-_y\rangle_{\Lambda}\leq
a'e^{E_0(\beta)}\label{eq23}
\end{gather}
and proves that $E_0$ is increasing at inf\/inity then
\eqref{eq21}, \eqref{eq22} will hold for a suf\/f\/iciently large inverse
temperature~$\beta$. The Peierls principle reduces the derivation of
\eqref{eq23} to the derivation of the contour bound.
\end{proof}

\noindent
{\bf Peierls principle.} {\it Let the contour bound hold
\begin{gather}
\langle\prod\limits_{\langle x,y\rangle\in
\Gamma}\chi^+_x\chi^-_y\rangle_{\Lambda} \leq
e^{-|\Gamma|E},\label{eq24}
\end{gather}
where $\langle\cdot\rangle_{\Lambda}$ denotes the Gibbs average for
the spin system confined to a compact domain $\Lambda$, $\Gamma$~is
a~set of the nearest neighbors, adjacent to the (connected) contour,
i.e.\ a boundary of the connected set of unit hypercubes centered at
lattice sites. Then \eqref{eq23} is valid with $E_0=a''E$, where
$a''$ is a positive constant independent of $\Lambda$.}

\begin{proof}[Proof of contour bound.]
Bricmont and Fontain derived the contour bound for the spin systems
with the potential energy \eqref{eq18} with the help of the second
Grif\/f\/iths \cite{[KS]} and Jensen inequalities \cite{[BF]} (see also~\mbox{\cite{[FL],[S4]}})
\begin{gather*}
\langle\sigma_{[A]}\sigma_{[B]}\rangle_{\Lambda[\Gamma]}-
\langle\sigma_{[A]}\rangle_{\Lambda[\Gamma]}\langle\sigma_{[B]}\rangle_{\Lambda[\Gamma]}\geq
0,
\qquad
\int e^{f}d\mu\geq \exp\left\{\int f d\mu\right\},\nonumber
\end{gather*}
where $d\mu$ is a probability measure on a measurable space. Their
proof starts form the inequality
\[
\chi^+_x\chi^-_y=e^{-\frac{\beta}{2}\sigma_x\sigma_y}e^{\frac{\beta}{2}
\sigma_x\sigma_y}\chi^+_x\chi^-_y \leq e^{-\frac{\beta}{2}
\sigma_x\sigma_y}\chi^+_x\chi^-_y\leq e^{-\frac{\beta}{2}
\sigma_x\sigma_y}.
\]
As a result ($\beta'=\bar{\varphi}\beta$)
\begin{gather*}
\langle\prod\limits_{\langle x,y\rangle \in
\Gamma}\chi^+_x\chi^-_y\rangle_{\Lambda}\leq \langle
e^{-\frac{\beta'}{2}\sum\limits_{\langle x,y\rangle\in
\Gamma}\sigma_x\sigma_y}\rangle_{\Lambda}=\langle e^{
\frac{\beta'}{2}\sum\limits_{\langle x,y\rangle \in\Gamma}\sigma_x\sigma_y}\rangle_
{\Lambda[\Gamma]}^{-1}
\\
\phantom{\langle\prod\limits_{\langle x,y\rangle \in
\Gamma}\chi^+_x\chi^-_y\rangle_{\Lambda}}{}
\leq e^{-\frac{\beta'}{2}\sum\limits_{\langle x,y\rangle \in\Gamma}
\langle\sigma_x\sigma_y\rangle_{\Lambda[\Gamma]}}
=e^{-E_{\Gamma}},
\end{gather*}
where $\langle\cdot ,\cdot\rangle_{\Lambda[\Gamma]}$ is the average
corresponding to the potential energy
\[
U_{\Gamma}(q_{\Lambda})=U(s_{\Lambda})+\frac{\bar{\varphi}}{2}\sum\limits_{\langle x,y\rangle \in
\Gamma}s_xs_y.
\]
In the last line we applied the Jensen inequality. From the second
Grif\/f\/iths inequality it follows that the average
$\langle\sigma_x\sigma_y\rangle_{\Lambda[\Gamma]}$ is a monotone
increasing function in $\varphi_{A}$. So, in the potential energy
determining this average we can put $\varphi_A=0$, except $A=\langle x,y\rangle$
 and leave the coef\/f\/icient $\bar{\varphi}$ in front of the bilinear
nearest-neighbor pair potential in \eqref{eq18} without increasing
the average. This leads to
\[
\langle\sigma_x\sigma_y\rangle_{\Lambda[\Gamma]}\geq
\langle\sigma\sigma'\rangle=Z^{-1}_2\left(\frac{\beta'}{2}\right)\sum\limits_{s_1,s_2=\pm
1}s_1s_2e^{\frac{\beta'}{2}s_1s_2},\qquad
Z_2(\beta)=\sum\limits_{s_1,s_2=\pm 1}e^{\frac{\beta'}{2}s_1s_2}.
\]
That is,
\[
E_{\Gamma}\geq |\Gamma| E, \qquad E=2^{-1}\beta'
\langle\sigma \sigma'\rangle
\]
or
\begin{gather*}
E=\beta (e^{2^{-1}\beta'}-e^{-2^{-1}\beta'})
(e^{2^{-1}\beta'}+e^{-2^{-1}\beta'})^{-1}\geq 2^{-1}\beta'
(1-e^{-\beta'}).
\end{gather*}
Here we used in the denominator the inequality
 $e^{-2^{-1}\beta'}\leq e^{2^{-1}\beta'}$. Obviously, $E$
tends to inf\/inity if $\beta'$ tends to inf\/inity. This implies \eqref{eq24}.
\end{proof}

\section{Discussion}
We showed that in the considered lattice
spin-boson models  with $J_A\leq 0$ ground states are Gibbsian and
the ground state averages for special observables are reduced to
averages in classical Ising models. This means that existence of
ground states order parameters is connected with existence of order
parameters in the associated Ising models and that a breakdown of
symmetries in the quantum systems is determined by a breakdown of
symmetries in Ising models. We considered the free boundary
conditions implying that for the cases of the perturbation
$V_{\Lambda}$, considered in the two theorems, the ground state
averages of $q_x$, $S^3_x$ are zero, that is  $\langle
\hat{q}_x\rangle_{\Lambda}=0$, $\langle S^3_x\rangle_{\Lambda}=0$ if
the associated Ising potential energy is an even function. In order
to make such the averages non-zero (explicit symmetry breaking) one
has to introduce special boundary conditions (quasi-averages) which
have to single out pure Gibbsian states in the associated Ising
models. It is known \cite{[Ai]} that for the two-dimensional
ferromagnetic Ising nearest-neighbor model there are two boundary
conditions which generate pure states and that every other state is
a convex linear combination of these two states. A discussion of a
construction of ground states in lattice spin and fermion quantum
systems with an explicit symmetry breaking a reader may f\/ind in~\cite{[KoT]}.

\begin{remark}
\label{remark2} Translation invariance means that
\[
J_{x_1,\dots , x_n}=J_{0,x_2-x_1,\dots ,x_n-x_1}, \hskip 20 pt
J_0(x;x_1,\dots ,x_n)=J_0(0;x_1-x,\dots ,x_n-x)
\]
where $J$, $J_0$ are symmetric functions. The short-range
character of interaction means that
\[
\max_{x}\sum\limits_{A}|J_{x,A}|< \infty,\qquad
\max_{x}\sum\limits_{A}|J_0(x;A)|< \infty.
\]
\end{remark}

\begin{remark}
\label{remark3}   If only one-point sets are
left in the sum for $V_{\Lambda}$ then the expression for
$H_{\Lambda}$ can be rewritten in the following way
\[
H_{\Lambda}=\sum\limits_{x\in\Lambda}H_{x}.
\]
The property of the ground state $\Psi_{\Lambda}$ to be a ground
state with the zero eigenvalue of a local Hamiltonian $H_x$ was
found earlier for special isotropic anti-ferromagnetic Heisenberg
chains with valence bond ground state in \cite{[AKLT]}.
\end{remark}

\pdfbookmark[1]{References}{ref}
\LastPageEnding

\end{document}